\newcommand{\grad}[1]{\nabla_{#1^{\tran}} }
\newcommand{\w}{\bm{w}}
\newcommand{\we}{\widetilde{\w}}
\newcommand{\eqdef}{\:\overset{\Delta}{=}\:}
\DeclareMathOperator*{\argmin}{argmin}
\newcommand{\tran}{{\sf T}}
\newtheorem{theorem}{Theorem}
\newtheorem{assumption}{Assumption}
\newtheorem{definition}{Definition}
\title{Local Graph-homomorphic Processing for Privatized Distributed Systems}
\name{Elsa Rizk\thanks{School of Engineering, École Polytechnique Fédérale de Lausanne (e-mail:\{elsa.rizk, ali.sayed\}@epfl.ch).}, Stefan Vlaski\thanks{Department of Electrical and Electronic Engineering, Imperial College London  (e-mail: s.vlaski@imperial.ac.uk).}, Ali H. Sayed}
\address{}
\begin{document}
\ninept
\maketitle
\begin{abstract}
We study the generation of dependent random numbers in a distributed fashion in order to enable privatized distributed learning by networked agents. We propose a method that we refer to as local graph-homomorphic processing; it relies on the construction of particular noises over the edges to ensure a certain level of differential privacy. We show that the added noise does not affect the performance of the learned model. This is a significant improvement to previous works on differential privacy for distributed algorithms, where the noise was added in a less structured manner without respecting the graph topology and has often led to performance deterioration. We illustrate the theoretical results by considering a linear regression problem over a network of agents.  
\end{abstract}
\begin{keywords}
distributed systems, distributed learning, differential privacy, random number generator
\end{keywords}
\section{Introduction and Related Material}
\label{sec:intro}
Distributed systems consist of a network of agents that collaborate to achieve a common goal. Some examples include distributed computing \cite{Apt2009} when components of a software are shared over a network, or distributed machine learning \cite{verbraeken2020survey} where the goal is to fit a global model to the data dispersed at different computing locations. During collaboration in such systems, communication between neighbours is necessary. However, the shared information might be sensitive, such as in distributed systems handling health or financial data. Thus, there is a need to privatize communication channels. One way to achieve secure communications is through cryptographic methods \cite{bonawitz2016practical,Mohassel2017SecureML,froelicher2021scalable,Niko2013}, while another is by adding random noise to make the communication differentially private \cite{dwork2014algorithmic, JayDLDP,LiDLDP,pathak2010DP,vlaski2020graphhomomorphic}. 
	
In the standard implementations, agents add \textit{independent} noise to their shared messages. This property degrades the performance of the learned model since the noises propagate over the graph through cooperation, as already shown in Theorem 1 of \cite{GFL}. In order to endow agents with enhanced privacy with minimal effect on  performance, it is necessary for the additional noise sources to be mindful of the graph topology \cite{vlaski2020graphhomomorphic}. However, this information is not available globally and, therefore, one needs to devise a scheme to generate graph-dependent random noise sources in a distributed manner and without assuming any global information about the graph structure. Motivated by this observation, we develop in this work a scheme that constructs privacy perturbations in a manner that their negative effect on performance is canceled out. One solution was suggested in \cite{bonawitz2016practical} for the case of federated learning. Pairs of agent collaborate to add noise that cancels out at the server. However, the suggested method generates pseudo-random numbers, which is less secure than true random numbers \cite{10.1145/3243734.3243756} and without any guarantees of differential privacy. 

The objective of this work is therefore to generate dependent random numbers in a distributed manner across a graph. The problem is challenging for at least two reasons. Firstly, generating random numbers is usually difficult without enforcing beforehand some distribution for the random process. In practice, random number generators exploit a variety of entropy sources in a computer such as mouse movements, available memory, or temperature \cite{johnston2018random}. Secondly, it is not evident how agents should exploit independent entropy sources to generate \textit{dependent} random numbers. Most available solutions \cite{8907316,cascudo2017scrape,syta2017scalable,hanke2018dfinity,schindler2018hydrand,popov2017decentralized,Blu} rely on a central orchestrator or consider a fully connected network. A truly distributed method does not appear to exist.

\section{Local Graph-homomorphic Process}\label{sec:LGH}

\subsection{Problem Setup}
We consider a network of $K$ agents connected by some graph topology (Fig. \ref{fig:net}). We let $a_{mk} > 0$ denote the weight attributed to the message sent by neighbour $m$ to agent $k$ and let $A=[a_{mk}]$ denote  the corresponding combination matrix. We assume $A$ is symmetric and doubly-stochastic, i.e.:
\begin{equation}
	\mathds{1}^\tran A = \mathds{1}^\tran \quad A\mathds{1}  = \mathds{1}.
\end{equation}
We further denote the neighbourhood of agent $k$ by $\mathcal{N}_k$; it consists of all agents connected to $k$ by an edge. 
\begin{figure}[h!]
	\centering
	\includegraphics[width=0.4\textwidth]{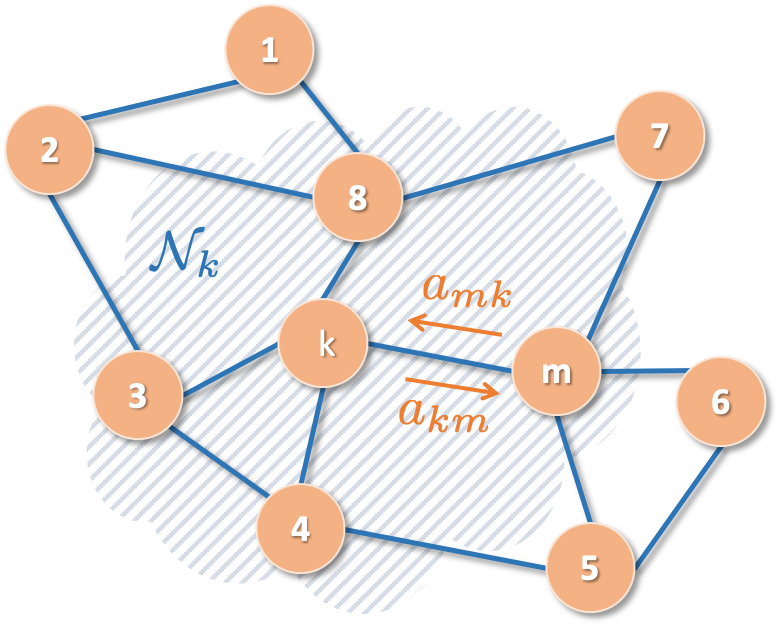}
	\caption{Illustration of a network of agents.}\label{fig:net}
\end{figure}

We consider problems where agents aggregate the received messages from their neighbours. In other words, if we let $\bm{\psi}_{mk,i}$ denote the message sent by agent $m$ to agent $k$ at time $i$, then: 
\begin{align}\label{eq:aggr}
	\w_{k,i} = \sum_{m \in \mathcal{N}_k}a_{mk} \bm{\psi}_{mk,i},
\end{align}
which is the aggregate of all messages arriving at agent $k$. We wish to secure the communication between the agents. One method is to mask the messages with some random noise to guarantee some level of differential privacy. If we denote by $\bm{g}_{mk,i}$ the noise added to the message $\bm{\psi}_{mk,i}$, then the secure aggregation becomes:
\begin{align}\label{eq:secAggr}
	\w_{k,i} = \sum_{m \in \mathcal{N}_k} a_{mk} \left(\bm{\psi}_{mk,i} + \bm{g}_{mk,i} \right).
\end{align}

Ideally, we would like that no information is lost by the added noise and that the aggregate message is equivalent to the non-noisy version. This is guaranteed if the noise sources added in \eqref{eq:secAggr} satisfy the following condition in every neighbourhood:
\begin{align}\label{eq:condLGH}
	\sum_{m \in \mathcal{N}_k} a_{mk} \bm{g}_{mk,i} = 0.
\end{align}
Noises that satisfy \eqref{eq:condLGH} are said to arise from a {\em local graph-homomorphic process}. This is in contrast to the {\em global graph-homomorphic process} proposed in \cite{vlaski2020graphhomomorphic} where condition \eqref{eq:condLGH} is replaced by one that should hold over the entire graph, namely: 
\begin{align}
	\sum_{k=1}^K \sum_{m\in \mathcal{N}_k} a_{mk}\bm{g}_{mk,i} = 0.
\end{align}

We would also like the  noises $\bm{g}_{mk,i}$ added in \eqref{eq:secAggr} to ensure some level of differential privacy. This means that if the agent $m$ chooses to share different messages $\bm{\psi}'_{mk,i}$, then an observer would be oblivious to this change. This is more formally defined as follows. 

\begin{definition}[\textbf{$\epsilon(i)-$Differential Privacy}]\label{def:DP}
	We say the communication is $\epsilon(i)-$differentially private for agent $m$ at time $i$ if the following condition on the probability of observing the respective events holds for all agents:
	\begin{align}\label{eq:condDP}
		\frac{\mathbb{P} \left( \left\{  \left\{ \bm{\psi}_{mk,j} + \bm{g}_{mk,j} \right\}_{k\in \mathcal{N}_m \setminus \{m\}}  \right\}_{j=0}^i  \right)}{\mathbb{P} \left( \left\{  \left\{ \bm{\psi}'_{mk,j} + \bm{g}_{mk,j}  \right\}_{k\in \mathcal{N}_m\setminus \{m\}}  \right\}_{j=0}^i  \right)} \leq e^{\epsilon(i)}.
	\end{align}
\qed
\end{definition}

\subsection{Process Description}
To motivate the local graph-homomorphic process, we examine the following example. Alice and Bob wish to communicate to Charlie the aggregate of their messages without Charlie knowing the individual messages. Alice and Bob decide to send a noisy version of their messages to Charlie. However, they wish when their noisy messages are aggregated by Charlie that he will still be able to retrieve the original sum. One way to do so is by ensuring that the noises generated by Alice and Bob cancel out when Charlie computes a weighted sum of the messages. For example, they could agree on some random number $\bm{x}$ that Alice would add to her message while Bob would subtract it from his message. Now assume that all communications between Alice and Bob need to go through Charlie, i.e., no direct communication channel exists between Alice and Bob. Then, in this case, both Alice and Bob will need to agree on the random number $\bm{x}$ without explicitly mentioning it. In other words, secure communication between them will need to be set up through Charlie. One way of doing so is through the Diffie-Helman key exchange protocol \cite{diffieHell}. 

Let Alice and Bob have individual secret keys $\bm{v}_1$ and $\bm{v}_2$, respectively.  Let $p$ be a known prime number and $b$ a base. Then, both Alice and Bob will broadcast their public keys $\bm{V}_1 = b^{\bm{v}_1} \mod p$ and $\bm{V}_2 = b^{\bm{v}_2} \mod p$. When they raise the public key of the other by their secret key and take the modulus $p$, they will now share a common secret  key $\bm{v}_{12} = b^{\bm{v}_1\bm{v}_2} \mod p$. This secret key can be used as the added noise; while Alice adds $\bm{v}_{12}$ to her message, Bob can subtract it. However, to ensure the communication is differentially private, one choice of distribution of the noise is the Laplace distribution Lap$(0,\sigma_g/\sqrt{2})$. A Laplace random variable can be generated from two uniform random variables by taking the log of the ratio of the two variables and then multiplying by the inverse of the scale parameter, namely, $\sqrt{2}/\sigma_g$. Thus, to generate a Laplace random variable, we require two secret keys $\{\bm{v}_{12}, \bm{v}'_{12}\}$ that are uniformly distributed. For $\bm{v}_{12}$ to be a uniform random variable, one of the local secret keys must be uniformly distributed over $[0,1]$ while the other must be sampled from a gamma distribution $\Gamma(2,1)$. Furthermore, the base must be set to $b = e^{-1}$ and then scaled by some constant $a$ that is a multiple of the prime number $p$. Therefore, for instance, Alice should sample two uniformly distributed secret keys $\{\bm{v}_1, \bm{v}'_1 \}\sim U ([0,1])$, and Bob must generate two secret keys $\{\bm{v}_2, \bm{v}'_2\}$ from a gamma distribution. The resulting two shared secret keys will be uniformly distributed on $[0,p]$. Then, setting:
\begin{equation}
	\bm{x} = \frac{\sqrt{2}}{\sigma_g} \ln \left( \frac{\bm{v}_{12}}{\bm{v}'_{12}} \right),
\end{equation}	
results in a Laplace noise, which Alice can add to her message while Bob subtracts it from his.

Returning to the network setting, we describe the process by which the agents generate their local graph-homomorphic noises. Each agent randomly splits its neighbourhood into two groups, $\mathcal{N}_k = \mathcal{N}_{+}\bigcup \mathcal{N}_{-}$, and communicates the split to its neighbourhood. One method of splitting the neighbourhood is by attributing to each neighbour a number, and then placing all the even-numbered agents in one set, and the odd-numbered agents in the other set. Then, every pair of agents from the two sub-neighbourhoods will generate together a shared noise, with the agent in $\mathcal{N}_{+}$ adding the noise to its message and the agent in $\mathcal{N}_{-}$ subtracting it. The communication betwen the agents of the sub-neighbourhoods occurs through the main agent $k$, since these agents might not be neighbours (e.g., agents 4 and $m$ in Fig. \ref{fig:net}). The messages are scaled by the weights attributed to the neighbours by a given agent. Thus, we force each neighbour to scale its noise by the inverse of the attributed weight. For agents $\ell \in \mathcal{N}_+$ and $m \in \mathcal{N}_{-}$, we denote the generated noise by $\bm{g}_{\{\ell m\}k,i}$ where we now add the subscript $\{\ell m\}$ to indicate that the noise was generated by the pair of agents. We follow the convention of writing the subscript of the agent from the positive set first, followed by that from the negative set. Then, every neighbour $\ell$ will send agent $k$ its message masked by the sum of all the noise it generated with the agents from the adjacent sub-neighbourhood. 
A more detailed description of the process is found in Algorithm \ref{alg:process}. An illustrative example is found in  Fig. \ref{fig:locGHP}.

\begin{algorithm}[h!]
	\begin{algorithmic}
		\caption{(Local graph-homomorphic processing)}\label{alg:process}
		\FOR{each iteration $i=1,2,\cdots$} \STATE{
			\FOR{each agent $k=1,2,\cdots, K$} \STATE {
				Split $\mathcal{N}_k = \mathcal{N}_+ \bigcup \mathcal{N}_-$ and communicate the split to the neighbours.
				\FOR{each pair of agents $\ell \in \mathcal{N}_+ $ and $m \in \mathcal{N}_-$}\STATE{
					Agent $\ell$ samples two secret keys $\{\bm{v}_{\ell}, \bm{v}'_{\ell}\} \sim U ([0,1])$, and angent $m$ samples two secret keys $\{\bm{v}_{m}, \bm{v}'_{m}\} \sim \Gamma (2,1)$. 
					\\
					Calculate and broadcast the public keys
					\begin{align*}
						\bm{V}_{\ell} &= a e^{-v_{\ell}} \mod p 
						\\
						\bm{V}'_{\ell} &= a e^{-v'_{\ell}} \mod p
						\\
						\bm{V}_{m} &= a e^{-v_{m}} \mod p 
						\\
						\bm{V}'_{m} &= a e^{-v'_{m}} \mod p
					\end{align*}
					\\
					Calculate the shared secret keys
					\begin{align*}
						\bm{v}_{\ell m} &= a e^{-\bm{v}_{\ell} \bm{v}_m} \mod p
						\\
						\bm{v}'_{\ell m} &= a e^{-\bm{v}'_{\ell} \bm{v}'_m} \mod p
					\end{align*}
					\\
					Set the noise 
					\begin{align*}
						\bm{g}_{\{\ell m\}k,i} = \frac{\sqrt{2}}{\sigma_g} \ln \left( \frac{\bm{v}_{\ell m}}{\bm{v}_{\ell m}'} \right)
					\end{align*}
					\\
				}\ENDFOR
			\FOR{each agent $\ell \in \mathcal{N}_k $ }\STATE{ 
				\IF{$\ell \in \mathcal{N}_+$}
					\STATE {Send $\bm{\psi}_{\ell k,i} + \sum\limits_{m \in \mathcal{N}_-} \bm{g}_{\{\ell m\}k,i}/a_{\ell k}$}
				\ELSE
				\STATE {Send $\bm{\psi}_{\ell k,i} - \sum\limits_{m \in \mathcal{N}_+} \bm{g}_{\{m \ell\}k,i}/a_{\ell k}$}
					\ENDIF
		} \ENDFOR	
			} \ENDFOR
		}\ENDFOR
		
	\end{algorithmic}
\end{algorithm}

\begin{figure}[h!]
	\centering
	\includegraphics[width =0.5\textwidth]{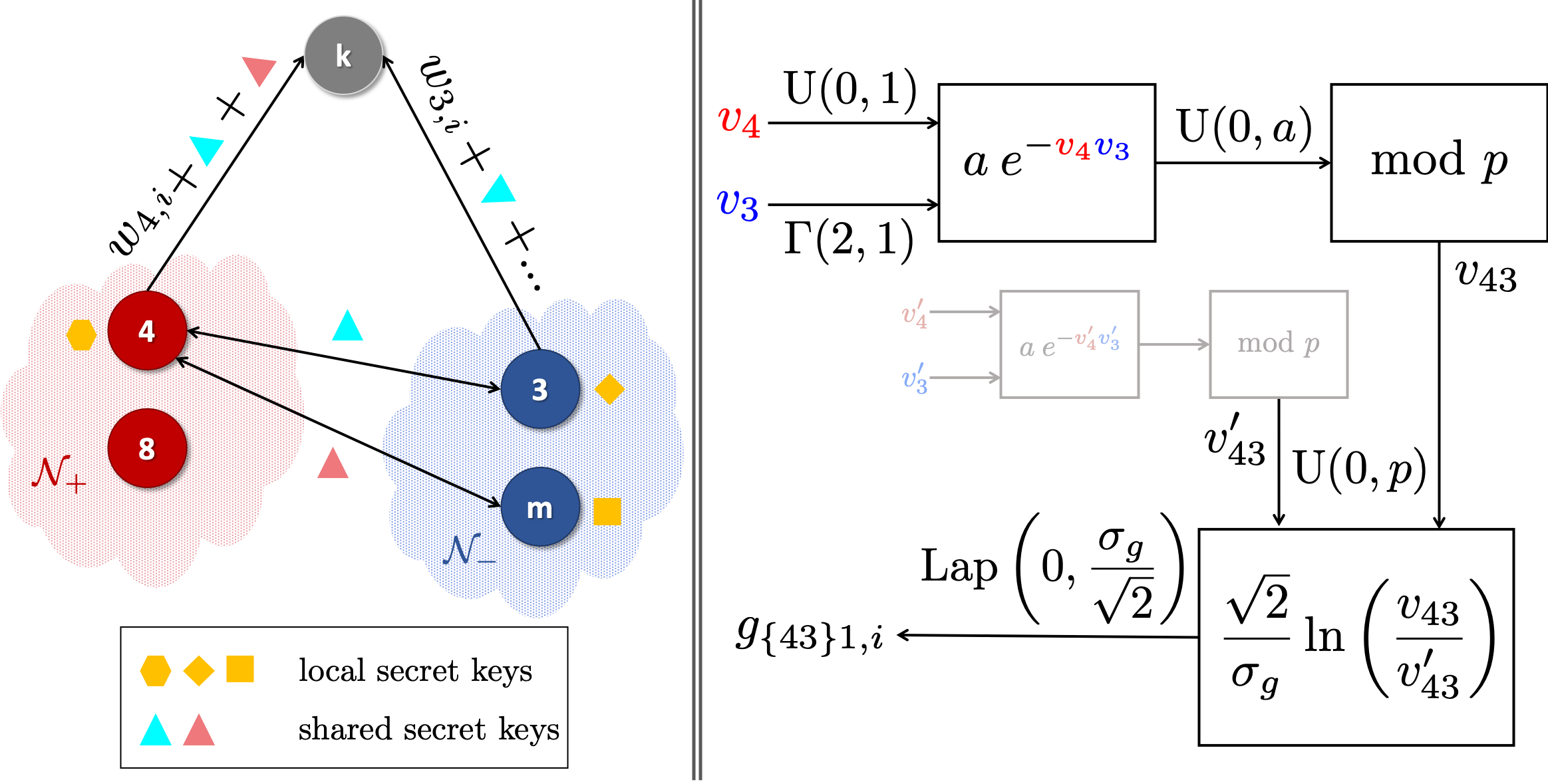}
	\caption{Illustration of the local graph-homomorphic process with the Diffie-Helman key exchange protocol on the left and the transformation of the random variable  on the right .}\label{fig:locGHP}
\end{figure}


\section{ Privatized Distributed Learning}\label{sec:distLearn}
We apply the above construction to the problem where a network of agents aims to solve an aggregate convex optimization problem of the form:
\begin{align}
	w^o \eqdef \argmin_{w\in\mathbb{R}^M} \frac{1}{K}\sum_{k=1}^K \left\{J_k(w) \eqdef \frac{1}{N_k}\sum_{n=1}^{N_k} Q_{k}(w;x_{k,n}) \right\},
\end{align}  
where the risk function $J_k(\cdot)$ is associated with agent $k$ and is defined as an empirical average of the loss function $Q_k(\cdot;\cdot)$ evaluated over the local dataset $\{x_{k,n}\}_{n=1}^{N_k}$. We assume the loss functions are convex with Lipschitz continuous gradients and the risk functions are strongly convex. 

\begin{assumption}[\textbf{Convexity and smoothness}]\label{ass:fct}
	The empirical risks $J_{k}(\cdot)$ are $\nu-$strongly convex, and the loss functions $Q_{k}(\cdot;\cdot)$ are convex and twice differentiable, namely for some $\nu > 0$:
	\begin{align} \label{eq:assFctConv}
		&J_{k}(w_2) \geq  \: J_{k}(w_1) + \grad{w}J_{k}(w_1)(w_2-w_1) + \frac{\nu}{2}\Vert w_2 - w_1 \Vert^2,\\
		&Q_{k}(w_2;\cdot) \geq  \: Q_{k}(w_1;\cdot) + \grad{w}Q_{k}(w_1;\cdot) (w_2 - w_1).
	\end{align}
	Furthermore, the loss functions have $\delta-$Lipschitz continuous gradients:
	\begin{equation}\label{eq:assFctLip}
		\Vert \grad{w}Q_{k}(w_2;x_{k,n}) - \grad{w}Q_{k}(w_1;x_{k,n})\Vert \leq \delta \Vert w_2 - w_1\Vert.
	\end{equation}
	\qed
\end{assumption}
We next make an assumption on the drift between the local optimal models $w^o_k = \argmin J_k(w)$ and the global optimal model $w^o$. For collaboration to make sense, the drift must be bounded. In case the difference is not bounded, then the agents should not collaborate to find one global model since that global model will not perform well locally. 
\begin{assumption}[\textbf{Model drifts}]\label{ass:mod}
	The distance of each local model $w_k^o$ to the global model $w^o$ is uniformly bounded, $\Vert w^o - w_k^o\Vert \leq \xi$. 
	\qed
\end{assumption}

To approximate the optimal model $w^o$, the agents can collaborate and run a distributed algorithm like consensus \cite{DeGroot,Johansson2008SubgradientMA,Nedic09} or diffusion \cite{chen2012diffusion,Tu12}, while at the same time adding noise to their messages to ensure a certain level of privacy. For instance, the privatized adapt-then-combine (ATC) diffusion algorithm would take the following form:
\begin{align}
	\bm{\psi}_{k,i} &= \w_{k,i-1} - \mu \grad{w}Q_{k} (\w_{k,i-1};\bm{x}_{k,b}),  \label{eq:ATC-adapt}
	\\
	\w_{k,i} &= \sum_{m\in \mathcal{N}_k} a_{mk} \left( \bm{\psi}_{m,i} + \bm{g}_{mk,i}\right) \label{eq:ATC-combine},
\end{align}
where we now drop the second subscript $k$ from the message $\bm{\psi}_{m,i}$ since the same message is sent to all the neighbours of agent $m$, i.e., $\bm{\psi}_{mk,i} = \bm{\psi}_{m,i}$. 
Then, because $\bm{g}_{mk,i}$ is sampled from a Laplacian distribution, this construction ensures that  the algorithm is $\epsilon(i)-$differentially private for some choice of variance $\sigma_g^2$ (see Theorem 2 in  \cite{GFL}). Recal that the local graph-homomorphic noises in \eqref{eq:ATC-combine} are generated from the Laplacian noises $\bm{g}_{mk,i}$:
\begin{equation}
	\bm{g}_{mk,i} = \begin{cases}
		-\sum\limits_{\ell \in \mathcal{N}_+} \bm{g}_{\{\ell m\}k,i}, & m \in \mathcal{N}_- \\
		\sum\limits_{\ell \in \mathcal{N}_-} \bm{g}_{\{m \ell\}k,i}. & m \in \mathcal{N}_+
	\end{cases}
\end{equation}
Since, by construction, the noises cancel out, the performance of the privatized ATC diffusion strategy \eqref{eq:ATC-adapt}--\eqref{eq:ATC-combine} ends up being equivalent to the performance of the traditional non-privatized strategy without degradation. Thus, the algorithm will still converge to an $O(\mu)$ neighbourhood of the optimal model $w^o$. This is a significant improvement compared to earlier results where the limiting neighborhood was on the order of $O(\mu^{-1})$ or $O(1)$ --- see, e.g., \cite{GFL,vlaski2020graphhomomorphic} and the discussions therein.

\begin{theorem}[\textbf{MSE convergence}]\label{thrm:MSE}
	Under assumptions \ref{ass:fct} and \ref{ass:mod}, the privatized diffusion strategy  \eqref{eq:ATC-adapt}$-$\eqref{eq:ATC-combine} with noise generated from the local graph-homomorphic process described earlier, converges exponentially fast for a small enough step-size to a neighbourhood of the optimal model:
	\begin{equation}
		\limsup_{i\to \infty} \mathbb{E}\Vert \we_{i}\Vert^2 \leq v_2^2 \mathds{1}^\tran (I - \Gamma)^{-1} \begin{bmatrix}
			v_1^2 \mu^2 \sigma_s^2 \\
			v_1^2 \mu^2 \sigma_s^2 + \frac{3\Vert \check{b}\Vert^2}{1-\rho(J_{\epsilon}) - \epsilon}
		\end{bmatrix},
	\end{equation}
for some constants $v_1^2, v_2^2, \epsilon, \rho(J_{\epsilon}), \check{b}$, $\sigma_s^2$ the bound on the variance of the gradient noise, and the convergence rate:
\begin{align}
	\Gamma \eqdef \begin{bmatrix}
		1-O(\mu) + O(\mu^2) & O(\mu^{0.5}) \\
		O(\mu) & \sqrt{\rho(J_{\epsilon}) + \epsilon} +O(\mu^2)
	\end{bmatrix}.
\end{align} 
\end{theorem}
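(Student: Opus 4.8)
The plan is to first observe that the local graph-homomorphic construction makes the privatized recursion \eqref{eq:ATC-adapt}--\eqref{eq:ATC-combine} coincide \emph{exactly} with ordinary (non-privatized) ATC diffusion, and then to run the standard mean-square-error analysis for diffusion learning under model drift. For the reduction, substitute the per-edge messages broadcast in Algorithm~\ref{alg:process} into \eqref{eq:ATC-combine}: the weight $a_{\ell k}$ multiplying agent $\ell$'s message cancels the $1/a_{\ell k}$ used to scale its noise, so the aggregate perturbation seen by agent $k$ equals $\sum_{\ell\in\mathcal{N}_+}\sum_{m\in\mathcal{N}_-}\bm{g}_{\{\ell m\}k,i}-\sum_{m\in\mathcal{N}_-}\sum_{\ell\in\mathcal{N}_+}\bm{g}_{\{\ell m\}k,i}=0$, which is precisely \eqref{eq:condLGH}. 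One has to confirm that the two agents forming a pair recover the same shared key from the Diffie--Hellman exchange (so the added and subtracted copies of $\bm{g}_{\{\ell m\}k,i}$ are bit-for-bit identical) and that the base/scaling choices of Section~\ref{sec:LGH} keep that key Laplace$(0,\sigma_g/\sqrt 2)$. Granting this, \eqref{eq:ATC-combine} collapses to $\w_{k,i}=\sum_{m\in\mathcal{N}_k}a_{mk}\bm{\psi}_{m,i}$, and the privacy layer is invisible to the trajectory of $\{\w_{k,i}\}$.

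Next I would set up the error recursion for the resulting deterministic-combination diffusion algorithm. Collect the iterates into the network error $\we_i=\col\{w^o-\w_{k,i}\}_{k=1}^K$, introduce the gradient noise $\bm{s}_{k,i}=\grad{w}Q_k(\w_{k,i-1};\bm{x}_{k,b})-\grad{w}J_k(\w_{k,i-1})$ with conditional second moment bounded by $\beta^2\|w^o-\w_{k,i-1}\|^2+\sigma_s^2$, and a bias vector $\check b$ capturing the mismatch between $w^o$ and the local minimizers $w_k^o$, which is $O(\xi)$ by Assumption~\ref{ass:mod}. Using the eigendecomposition of the symmetric doubly-stochastic $A$ as an $\epsilon$-perturbed Jordan form $A=\tfrac1K\mathds{1}\mathds{1}^\tran+\mathcal{V}_\epsilon\mathcal{J}_\epsilon\mathcal{V}_\epsilon^{-1}$ with $\rho(\mathcal{J}_\epsilon)=\rho(J_\epsilon)+\epsilon<1$, project $\we_i$ onto the consensus direction $\bar{\we}_i$ and its complement $\check{\we}_i$ (the change of basis produces the constants $v_1^2,v_2^2$ through norm equivalence). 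Strong convexity plus $\delta$-Lipschitz gradients (Assumption~\ref{ass:fct}) yield a contraction factor $1-O(\mu)+O(\mu^2)$ on $\mathbb{E}\|\bar{\we}_i\|^2$, driven by $\mu^2\sigma_s^2$ and coupled to $\check{\we}_i$ at order $O(\mu^{0.5})$; the spectral gap of $A$ on the complement yields a factor $\sqrt{\rho(J_\epsilon)+\epsilon}+O(\mu^2)$ on $\mathbb{E}\|\check{\we}_i\|^2$, driven by $\mu^2\sigma_s^2$, by $\|\check b\|^2/(1-\rho(J_\epsilon)-\epsilon)$, and coupled to $\bar{\we}_i$ at order $O(\mu)$.

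Stacking the two inequalities gives $\col\{\mathbb{E}\|\bar{\we}_i\|^2,\mathbb{E}\|\check{\we}_i\|^2\}$ bounded entrywise by $\Gamma\col\{\mathbb{E}\|\bar{\we}_{i-1}\|^2,\mathbb{E}\|\check{\we}_{i-1}\|^2\}$ plus the driving vector, with $\Gamma$ as in the statement. I would then check $\rho(\Gamma)<1$ for small $\mu$: since $\mathrm{tr}(\Gamma)=1-O(\mu)+\sqrt{\rho(J_\epsilon)+\epsilon}+O(\mu^2)$ and $\det(\Gamma)=(1-O(\mu))(\sqrt{\rho(J_\epsilon)+\epsilon})-O(\mu^{1.5})$, the characteristic polynomial (equivalently, Gershgorin after a diagonal similarity that absorbs the $O(\mu^{0.5})$ entry) forces $\rho(\Gamma)\le 1-O(\mu)$, hence exponential convergence. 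Iterating and passing to the limit gives $\limsup_i\col\{\mathbb{E}\|\bar{\we}_i\|^2,\mathbb{E}\|\check{\we}_i\|^2\}$ bounded entrywise by $(I-\Gamma)^{-1}$ applied to the driving vector; summing the two components (the $\mathds{1}^\tran$ in the statement) and converting back with $\mathbb{E}\|\we_i\|^2\le v_2^2(\mathbb{E}\|\bar{\we}_i\|^2+\mathbb{E}\|\check{\we}_i\|^2)$ produces the claimed bound.

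The main obstacle is not the MSE machinery, which follows the template of \cite{GFL,vlaski2020graphhomomorphic} and is largely bookkeeping once the combination step is deterministic; the load-bearing step is the exact, deterministic cancellation of the privacy noise in every neighbourhood. Verifying that the Diffie--Hellman-generated key pairs yield identical Laplace samples at the two ends of each edge, and that the modulo-$p$ reduction and the $1/a_{\ell k}$ scaling do not introduce any residual bias, is the crux — this is precisely what moves the limiting error from the $O(1)$ or $O(\mu^{-1})$ regime of prior work down to the $O(\mu)$ neighbourhood claimed here.
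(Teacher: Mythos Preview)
Your proposal is correct and follows the same approach as the paper: observe that the local graph-homomorphic noises cancel exactly in every neighbourhood so that \eqref{eq:ATC-combine} reduces to the standard ATC combination, and then invoke the known MSE analysis for non-privatized diffusion. The paper's own proof is in fact even terser than yours --- it simply records the cancellation and cites Theorem~9.1 of \cite{sayed2014adaptation} for the remainder --- whereas you have unpacked what that cited argument looks like (the centroid/deviation decomposition via the $\epsilon$-Jordan form of $A$, the coupled $2\times 2$ recursion with matrix $\Gamma$, and the spectral-radius check). One minor over-reach: for the MSE statement you do not need the shared key to be Laplace, only that the two paired agents compute \emph{the same} number so that the added and subtracted copies cancel; the Laplace property is only load-bearing for Theorem~\ref{thrm:DP}.
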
  

\begin{proof}
	Since the noise cancels out locally during each iteration, the algorithm is equivalent to the non-privatized version. The proof then follows the arguments used to establish Theorem 9.1 in \cite{sayed2014adaptation}.
\end{proof}

In the next theorem, we explain that the proposed algorithm is differentially private. 

\begin{theorem}[\textbf{Privacy of distributed learning}]\label{thrm:DP}
	Under the local graph-homomorphic process, the privatized diffusion algorithm \eqref{eq:ATC-adapt}--\eqref{eq:ATC-combine} is $\epsilon(i)-$differentially private with:
	\begin{align}
		\epsilon(i) \eqdef &\frac{2\sqrt{2}}{\sigma_g} \Bigg\{  \left( \frac{1- (1-O(\mu))^{i+1}}{O(\mu)} - 1 \right)a + b 
		+ O(\mu^{0.5}) (i-1)   \Bigg\},
	\end{align}
where $a$ and $b$ are some constants. 
\end{theorem}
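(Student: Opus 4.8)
The plan is to bound the log-likelihood ratio in Definition~\ref{def:DP} directly. The starting observation is that, because of the local graph-homomorphic construction, the noises $\bm{g}_{mk,i}$ cancel inside every combine step \eqref{eq:ATC-combine}, so the iterates $\{\w_{k,i}\}$ --- and hence every transmitted message $\bm{\psi}_{m,i}$ --- evolve exactly as in the non-privatized diffusion recursion; conditioned on the datasets and the mini-batch draws, $\bm{\psi}_{m,i}$ is therefore a deterministic quantity and the only randomness in the observed edge message $\bm{\psi}_{m,j}+\bm{g}_{mk,j}$ is the privacy noise. Since Algorithm~\ref{alg:process} draws fresh secret keys for every pair and every iteration, the noises $\bm{g}_{mk,j}$ are independent across the edges $(m,k)$ and across time $j$, so the joint density of the observed quantities $\{\bm{\psi}_{mk,j}+\bm{g}_{mk,j}\}$ over $k\in\mathcal{N}_m\setminus\{m\}$ and $j=0,\dots,i$ factorizes, and the logarithm of the ratio in \eqref{eq:condDP} reduces to a finite sum over $j$ and $k$ of per-message terms $\ln\!\big(f(z-\bm{\psi}_{m,j})/f(z-\bm{\psi}'_{m,j})\big)$, where $f$ is the density of the (sum-of-Laplace) noise. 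I would bound each such term, by the standard Laplace-mechanism argument --- the log-density has slope at most $\sqrt{2}/\sigma_g$, this being the inverse scale of $\mathrm{Lap}(0,\sigma_g/\sqrt{2})$ --- by $\tfrac{\sqrt{2}}{\sigma_g}\|\bm{\psi}_{m,j}-\bm{\psi}'_{m,j}\|$; bounding the two message sequences by a common norm estimate, and absorbing the number of observed neighbours $|\mathcal{N}_m\setminus\{m\}|$ (a network constant) into $a$ and $b$, collects the stated prefactor $\tfrac{2\sqrt{2}}{\sigma_g}$. This part parallels Theorem~2 of \cite{GFL}.

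The core task is then to bound $\sum_{j=0}^{i}\|\bm{\psi}_{m,j}-\bm{\psi}'_{m,j}\|$, where $\{\bm{\psi}'_{m,j}\}$ is the message sequence produced when agent $m$ substitutes a neighbouring local dataset. By Assumption~\ref{ass:fct} the induced perturbation of the adapt step \eqref{eq:ATC-adapt} at agent $m$ is $O(\mu)$ in norm, while at every other agent the adapt map is a contraction off its own fixed point; and by the cancellation above, the difference of the two network trajectories obeys the ordinary, noise-free, doubly-stochastic combination. Transforming this difference recursion into the eigenspace of $A$, as in the proof of Theorem~\ref{thrm:MSE} (i.e.\ following Theorem~9.1 of \cite{sayed2014adaptation}), I would obtain a two-component bound whose ``centralized'' mode contracts at rate $1-O(\mu)$, whose ``consensus'' mode contracts at rate $\sqrt{\rho(J_{\epsilon})+\epsilon}<1$, the two being coupled by the $O(\mu^{0.5})$ off-diagonal terms of $\Gamma$ and forced by the $O(\mu)$ data-change term and by the initial condition. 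Propagating this forward yields a bound of the shape $\|\bm{\psi}_{m,j}-\bm{\psi}'_{m,j}\|\lesssim a\,(1-O(\mu))^{j}+O(\mu^{0.5})$ for $j\ge 1$, together with a $j=0$ boundary term of size $b$.

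Summing over $j=0,\dots,i$ then finishes the argument: the geometric part gives $a\sum_{j=1}^{i}(1-O(\mu))^{j}=a\big(\tfrac{1-(1-O(\mu))^{i+1}}{O(\mu)}-1\big)$, the boundary term contributes $b$, and the non-contracting $O(\mu^{0.5})$ residual --- carried by the cross-coupling between the two error modes, which does not decay --- accumulates linearly to $O(\mu^{0.5})(i-1)$; multiplying by the prefactor $\tfrac{2\sqrt{2}}{\sigma_g}$ reproduces the claimed $\epsilon(i)$. I expect the main obstacle to be the middle step: cleanly separating, over the finite horizon $i$, the genuinely contracting centralized contribution (which produces the saturating first term) from the slower consensus mode and the $O(\mu^{0.5})$ coupling (which together produce the linear-in-$i$ term), while keeping all constants independent of $i$ and $\mu$. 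A smaller but necessary point is to check that the log-density of a \emph{sum} of $\mathrm{Lap}(0,\sigma_g/\sqrt{2})$ variables --- rather than a single one --- still has slope bounded by $\sqrt{2}/\sigma_g$, so that the Laplace-mechanism bound survives the homomorphic construction.
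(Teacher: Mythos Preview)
Your plan is correct and matches the paper's own approach: the paper provides only a three-line sketch (show the perturbations are Laplacian, bound the per-step sensitivity via gradient bounds, then verify \eqref{eq:condDP}), and your proposal is a faithful and considerably more detailed instantiation of exactly that route, including the reduction to the non-privatized recursion, the Laplace-mechanism likelihood bound, and the eigenspace/$\Gamma$-matrix recursion borrowed from Theorem~\ref{thrm:MSE} to produce the geometric and linear-in-$i$ pieces of $\epsilon(i)$. The two concerns you flag --- cleanly separating the contracting and $O(\mu^{0.5})$-coupled modes, and checking that a scaled sum of $\mathrm{Lap}(0,\sigma_g/\sqrt{2})$ variables still admits the $\sqrt{2}/\sigma_g$ log-Lipschitz bound (it does, by a pointwise convolution argument) --- are the right places to spend effort, and neither is an obstruction.
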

\begin{proof}
	We provide a sketch of the proof. We first show that the generated noise from the local graph-homomorphic process is Laplacian. Then, using a bound on the gradients at each step of the algorithm, we can bound the sensitivity of the algorithm. This can then be used to establish condition \eqref{eq:condDP} in Definition \ref{def:DP}.
\end{proof}

As time passes, $\epsilon(i)$ increases which means higher privacy loss. To mitigate this problem, the noise variance can be increased to guarantee a certain level of privacy. Since the variance of the perturbations does not affect the MSE bound, we do not hinder the model utility by increasing the variance, as opposed to the traditional differentially privatized algorithms (where the noises are not graph-homomorphic); in these cases, the MSE will worsen by an $O(\mu^{-1})\sigma_g^2$ factor.

\section{Experimental Results}\label{sec:exp}
We study a linear regression problem over a network of $K=30$ agents with a regularized quadratic loss:
\begin{align}
	\min_{w \in \mathbb{R}^2} \frac{1}{30\times 100}\sum_{k=1}^{30}\sum_{n=1}^{100} \Vert \bm{d}_k(n) - \bm{u}_{k,n}^\tran w \Vert^2 + 0.01\Vert w\Vert^2. 
\end{align}
We generate for each agent 100 data samples $\{ \bm{u}_{k,n}, \bm{d}_k(n)\}$. We sample two-dimensional feature vectors $\bm{u}_{k,n} \sim \mathcal{N}(0,R_u)$ and an independent noise $\bm{v}_p(n) \sim \mathcal{N}(0,\sigma_{v,k}^2)$ such that $\bm{d}_k(n) = \bm{u}_{k,n}^\tran w^{\star} + \bm{v}_k(n)$ for some generative model $w^\star $. The optimal model is given by:
\begin{align}
	w^o = (\widehat{R}_u + 0.01I)^{-1} (\widehat{R}_u w^{\star} + \widehat{r}_{uv}),
\end{align}
where $\widehat{R}_u $ and $ \widehat{r}_{uv}$ are the respective sample covariance matrix and cross-covariance.

We set the step-size $\mu = 0.4$, the noise variance $\sigma_g^2 = 0.01$, and the total number of iterations 1000. We repeat the algorithm 20 times and calculate the average MSD of the centroid model defined as:
\begin{equation}
	\w_{c,i} \eqdef \frac{1}{K}\sum_{k=1}^K \w_{k,i},
\end{equation}
and the individual models:
\begin{align}
	\text{MSD}_i& \eqdef \Vert \w_{c,i} - w^o\Vert^2, \\
	\text{MSD}_{\text{avg},i} & \eqdef \frac{1}{K}\sum_{k=1}^K \Vert \w_{k,i} - w^o\Vert^2.
\end{align}
We plot the results of the non-privatized algorithm, the privatized algorithm with random perturbations, and the privatized algorithm with local graph-homomorphic perturbations. As expected from Theorem \ref{thrm:MSE}, the local graph-homomorphic perturbations do not affect the performance of the algorithm.

\begin{figure}[h!]
	\centering
	\begin{subfigure}[b]{0.4\textwidth}
		\centering
		\includegraphics[width=\textwidth]{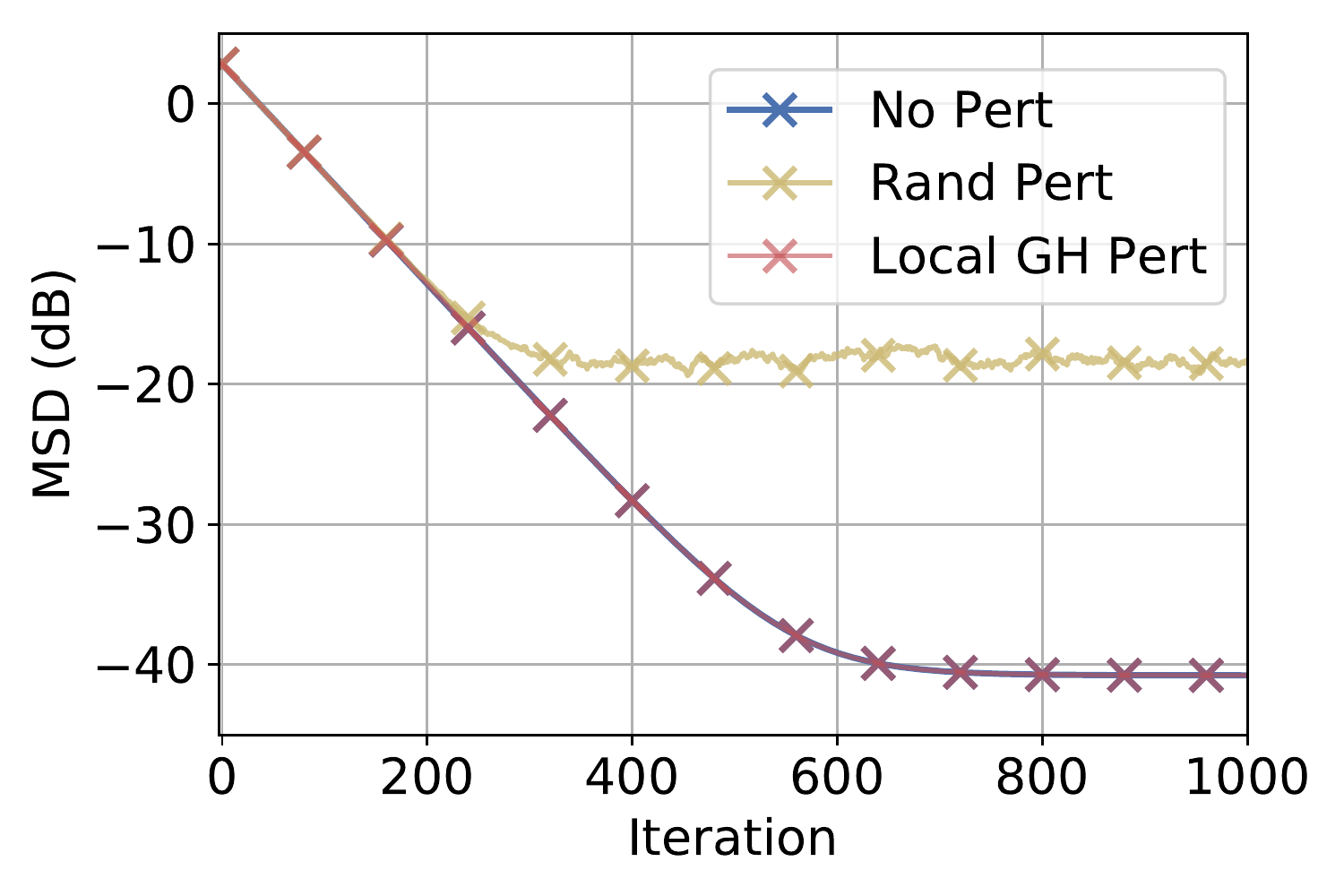}
		\caption{Centroid MSD}
	\end{subfigure}
	\begin{subfigure}[b]{0.4\textwidth}
		\centering
		\includegraphics[width=\textwidth]{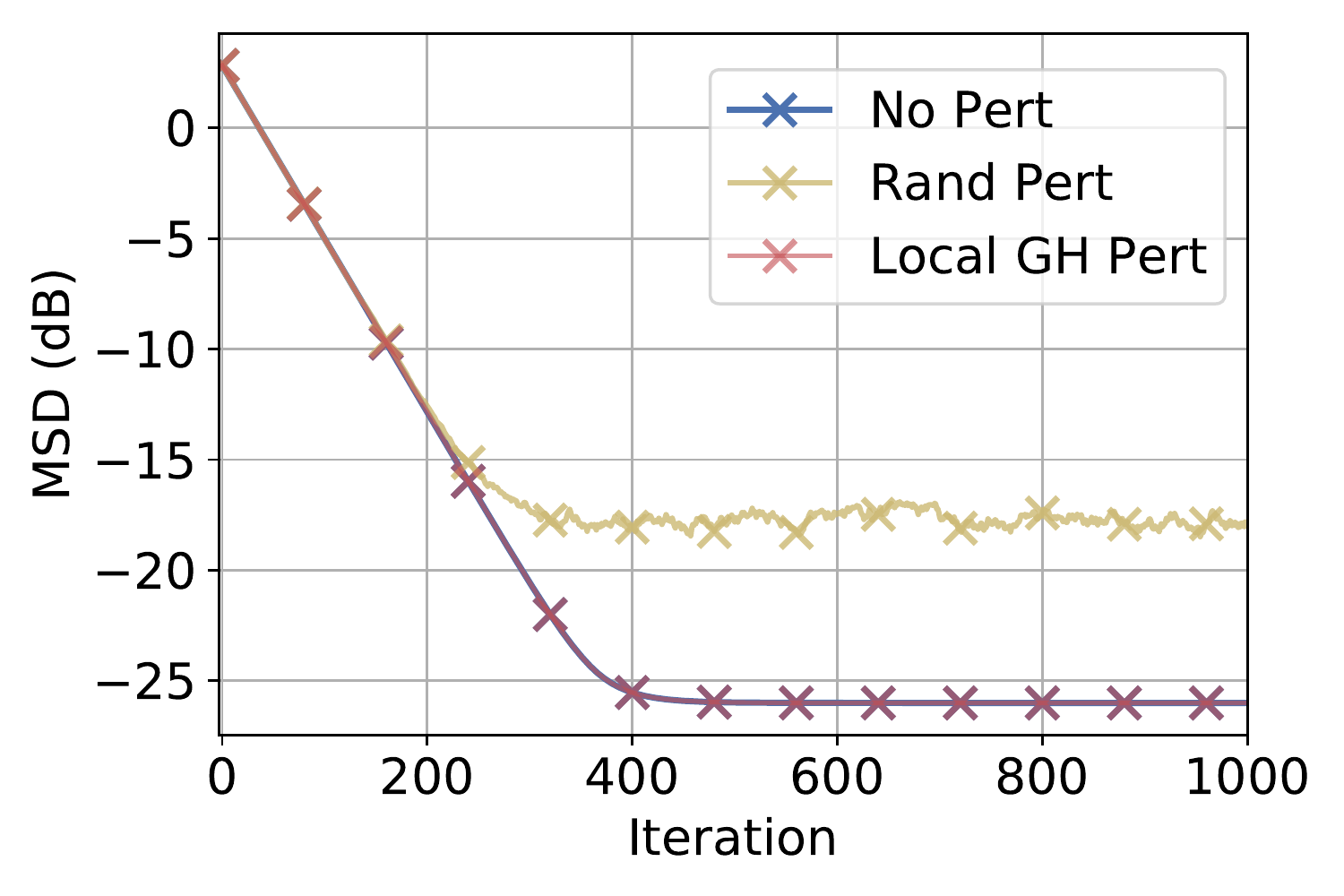}
		\caption{Average individual MSD}
	\end{subfigure}
	\caption{MSD plots for the distributed learning algorithms.}\label{fig:MSD-dist}
\end{figure}

\section{Conclusion}\label{sec:con}
We introduce a distributed random number generator and apply it to a distributed learning setting to ensure differential privacy without degradation in performance.

%


\vfill\pagebreak



\end{document}